\documentclass[submission]{eptcs}

\usepackage{ebutf8}

\title{A proof-theoretic view on scheduling in concurrency}
\author{Emmanuel Beffara
\institute{I2M, Université d'Aix-Marseille \& CNRS}}

\usepackage{macros}

\usepackage{amsthm}
\newtheorem{definition}{Definition}
\newtheorem{example}{Example}
\newtheorem{theorem}{Theorem}
\newtheorem{proposition}{Proposition}
\newtheorem{lemma}{Lemma}

\begin{document}

\maketitle

\begin{abstract}
This paper elaborates on a new approach of the question of the proof-theoretic
study of concurrent interaction called ``proofs as schedules''. Observing that
proof theory is well suited to the description of confluent systems while
concurrency has non-determinism as a fundamental feature, we develop a
correspondence where proofs provide what is needed to make concurrent systems
confluent, namely scheduling. In our logical system, processes and schedulers
appear explicitly as proofs in different fragments of the proof language and
cut elimination between them does correspond to execution of a concurrent
system. This separation of roles suggests new insights for the denotational
semantics of processes and new methods for the translation of $π$-calculi into
prefix-less formalisms (like solos) as the operational counterpart of
translations between proof systems.
\end{abstract}

\section{Introduction}

The extension of the familiar Curry-Howard correspondence outside the
intuitionistic and functional worlds has been an active topic for decades,
with a variety of approaches to the question of determinism and confluence, or
lack thereof.
The interactive nature of cut elimination procedures suggested relationships
with actually interactive models of computation like games or process algebras.
Several systems were proposed based on linear logic~\cite{girard-1987-linear},
following the intuition that it is a logic of interaction.
Interpretations of proofs as processes, first sketched by
Abramsky~\cite{abramsky-1994-proofs} and formalized by Bellin and
Scott~\cite{bellin-1994--calculus}, later refined by various people including
the author of this paper~\cite{beffara-2006-concurrent-a}, stressed
that proof nets~\cite{girard-1996-proof-nets} and process calculi have
significant similarities in dynamics.
At the same time, type systems for concurrency~\cite{yoshida-2001-strong}
revealed to be equivalent to variants of linear
logic~\cite{honda-2010-exact,caires-2010-session}.
These approaches successfully stress the fact that concurrent calculi are
very expressive and versatile models of interactive behaviour, however they
are not satisfactory yet as a proof-theoretical account of concurrency, because they
tend to impose determinism in execution, effectively constraining processes to
essentially functional behaviour.

Getting out of this restriction requires to actually handle the inherent
non-determinism in concurrent systems.
A natural option is of course to relate processes with proofs in a
non-deterministic system.
The prototypical such system is the original classical sequent calculus LK,
however because of its structural rules it feels too remote from actual
process dynamics to offer a direct correspondence (although encodings through
intermediate systems have been proposed~\cite{vanbakel-2014-classical}).
Several more direct approaches have been proposed based on linear logic:
non-determinism in the style of complexity theory has been modelled using the
additives of linear
logic~\cite{maurel-2003-nondeterministic,mogbil-2010-non-deterministic,terui-2004-proof};
differential logic was recently developed by Ehrhard and
Regnier~\cite{ehrhard-2006-differential} and features a structured form of
non-determinism in its cut elimination; its untyped proof formalism was even
shown expressive enough to represent the
$π$-calculus~\cite{ehrhard-2010-interpreting}.

The present work elaborates on a different approach to the topic, first
sketched in a previous paper called \emph{Proofs as
executions}~\cite{beffara-2012-proofs}.
Our point of view is that cut elimination and general process execution should
not be made to match directly, because their meaning is on different levels.
On the one hand, the meaning of proofs lies in their normal forms, hence cut
elimination should be confluent in order to preserve meaning.
On the other hand, the meaning of a process is not its final irreducible form
but what happens to get there, as interaction with other processes (hence
execution of interactions should definitely not preserve meaning).
We thus establish a correspondence between proofs and interaction plans for
processes, hereafter called \emph{schedules} (rather than executions).
These schedules are what provides necessary information to make a system
deterministic, {i.e.} to make its execution confluent.
Note that despite the use of the word ``schedule'', we do not claim any precise
link with {e.g.} scheduling in operating systems; the word ``strategy'' might
also be appropriate, although the standard meaning of ``reduction strategy''
is way more restrictive than what we describe here.

Concretely, we develop this idea in a simple framework, relating schedules of
finite processes with proofs in multiplicative linear logic; this is to be
considered as a first step towards a wider correspondence, illustrating the
idea of our interpretation.
The process calculus we use is introduced in Section~\ref{sec:mccs}, the
logical system is introduced in Section~\ref{sec:mlla}.
Section~\ref{sec:exec} presents two logical translations of processes for which
the correspondence is proved, once in a synchronous form in
Section~\ref{sec:sync} and once in an asynchronous form in
Section~\ref{sec:async}, in which logic can abstract away from
execution order.
Section~\ref{sec:discussion} discusses extensions and directions for future
developments of the proofs-as-schedules paradigm, related to semantics of
processes (Section~\ref{sec:semantics}), the status of name hiding and passing
(Section~\ref{sec:hiding}) and the relationships with CPS translations and
determinisation in classical logic (Section~\ref{sec:cps}).

\medskip\emph{Note.}
The previous paper~\cite{beffara-2012-proofs} formalizes this idea by defining
a logical system, which appears as a kind of type system for processes, with
the crucial property that for every lock-avoiding execution of a process $P$
one can deduce a typing of $P$, hence a proof, whose cut elimination does
correspond to the considered execution.
A limitation of this result was that the ``type'' of a process could be very
different depending on its particular environment and execution, which made it
difficult to deduce an interpretation of processes in logical terms.
The present work improves the previous results in several respects:
Firstly, the typing of processes is uniform, and really independent of the
particular executions it might exhibit.
Subsequently, the contributions of the process and the scheduler in the proof
corresponding to a scheduling are clearly identified, as simple fragments of
the proof language.
This in turn suggests new approaches to the logical study of denotational
models of processes.
Despite these references to a previous work, this paper aims to be
self-contained.

\medskip\emph{Related work.}
The idea of matching proofs with executions is reminiscent of the proof-search
approach to computation.
Indeed, the relationship between logical linearity and interaction has been
explored (for instance by Miller and
Tiu~\cite{miller-1992--calculus,tiu-2005-proof}) but our approach has
different roots, in particular because of the status of cut-elimination,
nevertheless proof search in our context looks like the building of a schedule
for a fixed process, as illustrated in the technical proofs in this paper.
Bruscoli establishes a correspondence between proof search in a
variant of MLL and execution of a minimal CCS, in a context of
deep-inference~\cite{bruscoli-2002-purely}.
Although our translation of processes into formulas is different that that
used by the above works, connections can certainly be established, but we
defer this to later developments.

\section{Multiplicative CCS}
\label{sec:mccs}

We consider processes in a fragment of the standard language
CCS~\cite{milner-1989-communication}.
The fragment we use, hereafter called \emph{multiplicative} CCS (or MCCS), is
defined by the following grammar:
\begin{syntax}
  \define P, Q
  \case 1 \comment{inactive process}
  \case P\para Q \comment{parallel composition}
  \case \actp[ℓ]{a}P \comment{positive action prefix}
  \case \actn[ℓ]{a}P \comment{negative action prefix}
\end{syntax}
where $a$ is taken from a given set of \emph{channel names} and
$ℓ$ is taken from a given set $\Locs$ of \emph{locations}.
We impose that each location occurs at most once in any term: locations are
used to identify occurrences of actions in a process.
Note that we use $1$ for the inactive process instead of the usual $0$ because
it is the neutral element of $\para$ which is a multiplicative operation (in
the sense that it distributes over non-deterministic choice, which is then
additive, and not the other way around; besides it is in correspondence with
the unit $\mathbf{1}$ of linear logic and surely not with $\mathbf{0}$).

\begin{definition}
  Structural congruence is the smallest congruence $\equiv$ that makes
  parallel composition commutative and associative with $1$ as neutral
  element.
\end{definition}

\begin{definition}[execution]\label{def:execution}
  Execution is the relation over structural congruence classes, labelled
  by partial involutions over $\Locs$, defined by the rule
  \[
    \actn[ℓ]{a}P \para \actp[m]{a}Q \para R
    \quad \redpi{\{(\ell,m)\}} \quad
    P \para Q \para R
  \]
  Let $\redpix{}$ be the reflexive transitive closure of $\redpi{}$, with the
  annotations defined as $P\redpix{\emptyset}P$ and
  if $P\redpix{c}Q\redpix{d}R$ then $P\redpix{c\cup d}R$.
\end{definition}

Note that, by definition, $P\redpix{\emptyset}Q$ holds is when $P$ and $Q$ are
structurally congruent.
Moreover, the subscript $c$ in $P\redpix{c}Q$ is indeed an involution, {i.e.}
a set of disjoint pairs and not a list of pairs: it does not record the order
of interactions, only the information of which actions were synchronised
together; as stated in Theorem~\ref{thm:pairing}, this does characterize
executions up to permutation of independent transitions.

Locations do not affect the execution of processes: they are nothing more than
a technical tool used to name occurrences of actions in a formal way.
They will be used in Section~\ref{sec:sync} for the correspondence
between execution and cut elimination and in Section~\ref{sec:semantics} for
the discussion of the semantic interpretation of our results.
We introduce a few useful notations for this purpose:

\begin{definition}
  Let $P$ be an MCCS term.
  \begin{itemize}
  \item The set of locations occurring in $P$ is written $\Locs(P)$.
  \item Given $\ell\in\Locs(P)$, the \emph{subject} of $\ell$ is the name
    tagged by $\ell$, written $\subj[P]{\ell}$.
    The \emph{polarity} of $\ell$ is that of the action tagged by its subject,
    written $\pol[P]{\ell}$, element of $\{\pm1\}$.
  \item The action order of $P$ is the partial order $≤_P$ over $\Locs(P)$
    such that $\ell<_Pm$ for every sub-term $\actp[\ell]{x}Q$ of $P$ with
    $m\in\Locs(Q)$.
  \end{itemize}
\end{definition}

Remark that the information of locations, subjects, polarities and action
order completely characterizes a class of structural congruence of MCCS terms,
so we can consider that providing such information does define a process
without ambiguity.

When locations are unimportant for a given statement, they will be kept
implicit.
With this convention, the definition of the reduction relation is the standard
one:
\[
  \actp{a}P \para \actn{a}Q \para R
  \quad → \quad
  P \para Q \para R
\]

This language is very minimalistic and its operational theory is actually very
simple (one can prove that its bisimilarity is completely axiomatized by a
single rule scheme $(\actp{a}P)^{n+1}≃\actp{a}(P\para(\actp{a}P)^n)$ plus
structural congruence).
The results of this paper extend smoothly to a framework with replication and
sum.
A much more subtle point is that of name hiding, which is discussed in
Section~\ref{sec:hiding}.

\section{MLL with actions}
\label{sec:mlla}

Formulas of propositional multiplicative linear logic with
actions (in short MLLa) are generated from the following grammar:
\begin{syntax}
  \define A, B
  \case α \altcase α^⊥ \comment{propositional variable}
  \case A⊗B \altcase A⅋B \comment{conjunction, disjunction}
  \case \modp{a}{A} \altcase \modn{a}{A} \comment{action modalities}
\end{syntax}
where $α$ is taken from a given set of propositional variables and $a$ is
taken from the set of channel names.
Negation is defined inductively on formulas in the standard way, with
$(\modp{a}A)^⊥=\modn{a}(A^⊥)$;
linear implication $A⊸B$ is defined as a shorthand for $A^⊥⅋B$.

The modalities are reminiscent of those of
Hennessy-Milner~\cite{hennessy-1985-algebraic} logic, however the logic itself
is very different, in particular because of linearity.
While Hennessy-Milner modalities $\langle a\rangle A$ and $[a]A$ mean,
respectively, ``I can do $a$ and then satisfy $A$'' and ``whenever I do $a$, I
will satisfy $A$'', in our logic the formula $\modp{a}A$ means something like
``I will do $a$ and then exhibit behaviour $A$''.
There is no analogue of the modality $[a]$ because all the information we
provide is positive.
The duality induced by linear negation $(\cdot)^⊥$ is not a classical negation
but a change of roles: $A^⊥$ is the type of behaviours that interact correctly
with behaviours of type $A$.
The connectives $⊗$ and $⅋$ are not classical conjunction and disjunction
either, but spatial ones representing causality and independence between parts
of a run, using connectedness/acyclicity arguments to describe avoidance of
deadlocks.

\begin{table}
  \centering
\begin{tabular}{c@{\quad}c@{\hspace{4em}}c@{\quad}c}
  \begin{prooftree}
    \Infer0[ax]{ ⊢ A^⊥, A }
  \end{prooftree} &
  \begin{tikzpicture}[baseline={(0,0)}]
    \draw (0,0) to [out=90,in=90,looseness=1.5] node (ax) [midway,axiom] {} (2,0);
    \path (ax) node [above left] {$A^⊥$} node [above right] {$A$};
  \end{tikzpicture}
&
  \begin{prooftree}
    \Hypo{ ⊢ Γ, A }
    \Hypo{ ⊢ A^⊥, Δ }
    \Infer2[cut]{ ⊢ Γ, Δ }
  \end{prooftree} &
  \begin{tikzpicture}
    \node [subnet] (P) at (0,0) {$π$};
    \node [subnet] (Q) at (2,0) {$ρ$};
    \node (G) at (-0.5,-2) {};
    \node (D) at (2.5,-2) {};
    \draw (P) to [out=-80,in=-110,looseness=1.2] (Q);
    \draw (P) to [out=-120,in=90] (G) ;
    \draw (Q) to [out=-60,in=90] (D) ;
  \end{tikzpicture}
\\[3ex]
  \begin{prooftree}
    \Hypo{ ⊢ Γ, A }
    \Hypo{ ⊢ B, Δ }
    \Infer2[$⊗$]{ ⊢ Γ, A⊗B, Δ }
  \end{prooftree} &
  \begin{tikzpicture}
    \node [subnet] (P) at (0,0) {$π$};
    \node [subnet] (Q) at (2,0) {$ρ$};
    \node [link] (tens) at (1,-1.3) {$⊗$};
    \node (G) at (-0.5,-2.5) {};
    \node (c) at (1,-2.5) {};
    \node (D) at (2.5,-2.5) {};
    \draw (P) -- (tens);
    \draw (Q) -- (tens);
    \draw (tens) -- (c);
    \draw (P) to [out=-120,in=90] (G) ;
    \draw (Q) to [out=-60,in=90] (D) ;
  \end{tikzpicture}
&
  \begin{prooftree}
    \Hypo{ ⊢ Γ, A, B }
    \Infer1[$⅋$]{ ⊢ Γ, A⅋B }
  \end{prooftree} &
  \begin{tikzpicture}
    \node [subnet] (P) at (0,0) {$π$};
    \node [link] (par) at (0.7,-1.3) {$⅋$};
    \node (G) at (-0.8,-2.5) {};
    \node (c) at (0.8,-2.5) {};
    \draw (P) to [out=-80,in=120] (par);
    \draw (P) to [out=-40,in=60] (par);
    \draw (par) -- (c);
    \draw (P) to [out=-130,in=90] (G) ;
  \end{tikzpicture}
\\[5ex]
  \begin{prooftree}
    \Hypo{ ⊢ Γ, A }
    \Infer1[$\modp{a}$]{ ⊢ Γ, \modp{a}{A} }
  \end{prooftree} &
  \begin{tikzpicture}
    \node [subnet] (P) at (0,0) {$π$};
    \node [link] (mod) at (0.7,-1.4) {$\modp{a}$};
    \node (G) at (-0.6,-2.5) {};
    \node (c) at (0.6,-2.5) {};
    \draw (P) to [out=-50,in=90] (mod);
    \draw (mod) -- (c);
    \draw (P) to [out=-130,in=90] (G) ;
  \end{tikzpicture}
&
  \begin{prooftree}
    \Hypo{⊢ Γ, A }
    \Infer1[$\modn{a}$]{ ⊢ Γ, \modn{a}{A} }
  \end{prooftree} &
  \begin{tikzpicture}
    \node [subnet] (P) at (0,0) {$π$};
    \node [link] (mod) at (0.7,-1.4) {$\modn{a}$};
    \node (G) at (-0.6,-2.5) {};
    \node (c) at (0.6,-2.5) {};
    \draw (P) to [out=-50,in=90] (mod);
    \draw (mod) -- (c);
    \draw (P) to [out=-130,in=90] (G) ;
  \end{tikzpicture}
\end{tabular}
  \caption{Proof rules and proof net syntax for MLLa.}
  \label{proof-nets}
\end{table}

Proof rules are the standard rules of linear logic, extended with modalities,
as shown in Table~\ref{proof-nets}.
We use the language of proof nets: a \emph{proof structure} is a directed
graph whose nodes are labelled by names of proof rules and have the
appropriate input and output degrees for the rule they represent (premisses
are ingoing edges and conclusions are outgoing, the implicit orientation in
the pictures is downwards), a \emph{proof net} is a proof structure that is
the translation of a sequent calculus proof using the rules of
Table~\ref{proof-nets}.
As far as proof theory is concerned, these modalities are innocuous: they
commute with all other rules, their cancellation rule in cut elimination is
the obvious one, the types $A$ and $\modp{a}{A}$ are isomorphic.
Indeed, they are nothing more than markers in formulas (yet we call them
\emph{modalities} since they do have a logical meaning: they impose
restrictions on the structure of their possible proofs and this is the key for
the results in the present work).
For these reasons, standard theory for multiplicative proof nets applies to
proof nets of MLLa, including the Danos-Regnier correctness criterion:
\begin{theorem}[Danos-Regnier~\cite{danos-1989-structure}]
  Let $π$ be a proof structure.
  Define a \emph{switching graph} of $π$ as any graph obtained by deleting one
  of the ingoing edges of each $⅋$ node and forgetting about edge orientation.
  Then $π$ is a proof net if and only if all its switching graphs are
  connected and acyclic.
\end{theorem}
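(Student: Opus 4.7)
The plan is to reduce the statement to the classical Danos--Regnier theorem for pure MLL by exploiting the fact that the modality links $\modp{a}$ and $\modn{a}$ are unary: each has exactly one premiss and one conclusion, so it carries no switching choice and simply contributes a degree-two vertex to every switching graph. Formally, I would define an \emph{erasure} operation $\overline{(\cdot)}$ on proof structures that contracts every modality edge, collapsing its two endpoints into a single vertex whose type is the underlying MLL formula. This operation commutes with the formation of switching graphs, and contracting a degree-two vertex in a graph preserves both connectedness and acyclicity. Hence $π$ satisfies the criterion in MLLa if and only if $\overline{π}$ satisfies it in MLL.

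For soundness I would proceed by induction on the MLLa sequent derivation of $π$. The axiom, cut, $⊗$ and $⅋$ cases are the standard ones: for $⊗$ the two premiss nets are glued below a new link that introduces a cut in any switching, while for $⅋$ the new switching choice coincides, modulo contracting the $⅋$ link, with a switching of the premiss net. In the $\modp{a}$ and $\modn{a}$ cases, the derivation step merely grafts a unary link onto an edge of a structure whose switching graphs are already connected and acyclic by the induction hypothesis, and adding a degree-two vertex on an edge preserves both properties.

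For completeness I would use the erasure above: if every switching graph of $π$ is connected and acyclic, the same holds of $\overline{π}$, which is therefore an MLL proof net by the classical theorem. A sequent-calculus derivation of $\overline{π}$ can then be lifted back to MLLa by reinserting a $\modp{a}$ or $\modn{a}$ rule each time the erased link is a modality node, the freedom to do so resting on the fact, already emphasised in the paper, that modality rules commute with every other rule of MLLa. Equivalently one may argue directly by induction on the number of links in $π$: if some conclusion is the output of a modality link, peel off that link and recurse; otherwise apply the usual splitting-tensor / terminal-$⅋$ argument of Danos and Regnier verbatim, since modalities never participate in switchings.

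The only real obstacle is the bookkeeping that shows contraction of unary nodes to be a bijection between switching graphs of $π$ and of $\overline{π}$ which preserves connectedness and acyclicity; once this is set up, every substantive step reduces to the classical MLL result.
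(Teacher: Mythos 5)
The paper does not actually prove this theorem: it imports it from Danos--Regnier and justifies its applicability to MLLa only by the informal remark that the unary modality links are proof-theoretically innocuous markers that commute with every rule. Your erasure-of-unary-links argument is a correct formalisation of exactly that remark, so the proposal is sound and matches the paper's implicit approach; the only nitpick is that a modality link whose conclusion is a conclusion of the whole net has degree one rather than two in a switching graph, which changes nothing in the contraction argument.
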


We restrict ourselves to propositional logic, however our construction heavily
relies on the instantiation of propositional variables with particular
formulas: this substitution mechanism is a key part of interaction as we model
it.
The same work could be carried out, possibly in a cleaner way, using
second-order quantification, however we decided not to use this quantification
explicitly because it makes proof theory more complicated and we do not need
its full power in the present work.
A similar approach was taken by Terui in his translation of boolean circuits
into MLL proofs~\cite{terui-2004-proof}.

In order to ease the formulation of operational correspondence proofs, we
assume that each proof structure comes with an injective labelling of its
modality links with locations in $\Locs$.
\begin{definition}
  Let $π$ be an MLLa proof structure (possibly with cuts).
  \begin{itemize}
  \item The set of locations occurring in $π$ is written $\Locs(π)$.
  \item Given $\ell\in\Locs(π)$, the \emph{subject} of $\ell$ is the name
    in the modality tagged by $\ell$, written $\subj[π]{\ell}$.
    The \emph{polarity} of $\ell$ is that of the modality, written
    $\pol[π]{\ell}$, element of $\{\pm1\}$.
  \item The proof order of $π$ is the partial order $≤_π$ over $\Locs(π)$
    such that $\ell<_πm$ when the link tagged $m$ appears in the tree of
    premisses of the link tagged by $ℓ$.
  \end{itemize}
\end{definition}

These definitions are similar to those for MCCS terms.
The partial order $≤_π$ is a bit more complicated to define but can be
explained in a simple way.
Observe that a proof structure $π$ can always be decomposed as a family of
trees of links together with a set of axioms between leafs and possibly a set
of cuts between roots:
\begin{center}
  \begin{tikzpicture}[tree/.style={draw,isosceles triangle,
        isosceles triangle apex angle=100,
        shape border rotate=-90}]
    \node [tree] (p1) at (0,0) {$π_1$};
    \node [tree] (p2) at (2cm,0) {$π_2$};
    \node at (3.5cm,0) {$\cdots$};
    \node [tree] (pn) at (5cm,0) {$π_n$};
    \draw (p1.160) to [out=90,in=90] node [midway,axiom] {} (p2.160);
    \draw (p1.40) to [out=90,in=90,looseness=0.5] node [midway,axiom] {} (pn.160);
    \draw (p2.110) to [out=90,in=90,looseness=1.5] node [midway,axiom] {} (p2.20);
    \draw (p1.apex) to [out=-90,in=-90,looseness=0.7] (p2.apex);
    \draw (pn.apex) -- +(0,-3mm);
  \end{tikzpicture}
\end{center}
Then the partial order $≤_π$ over $\Locs(π)$ such that $ℓ≤_πm$ when the
modality link labelled $ℓ$ occurs below that labelled $m$ in one of the
trees $π_i$.

\section{Execution as implication}
\label{sec:exec}

In this section, we formally develop the correspondence between operational
semantics of processes and proofs in MLLa:
\begin{itemize}
\item Each process term $P$ is translated into a cut-free proof in MLLa, where
  instances of modality rules correspond to action prefixes in $P$; the
  conclusion $\ttype{P}$ of this proof (which can be considered as the
  type of $P$) is deduced syntactically from P.
\item A schedule for reducing $P$ to $Q$ is a \emph{multiplicative} proof
  (possibly with cuts) of the implication $\ttype{P}⊸\ttype{Q}$,
  {i.e.} this proof may not contain modality rules, which stresses the fact
  that it is only allowed to relate actions in a given process without
  introducing new actions.
\item An execution of $P$ corresponds to a cut elimination sequence of $P$ cut
  against a schedule for reducing $P$ to some $Q$; various cut elimination
  sequences correspond to different linear orderings of independent events in
  a given execution.
\end{itemize}
Hence the concurrent aspect of execution is represented by the variety of
proofs for a given implication, while different choices in cut elimination
correspond to unimportant ordering decisions, which is consistent with the
fact that cut elimination is confluent.

We implement this correspondence in two different ways.
The first version is \emph{synchronous} in that it exactly relates provability
of implication and cut elimination with step-by-step execution of processes.
The second version is \emph{asynchronous} in that it allows scheduling
decisions to be made in advance of execution, which leads to a more flexible
system, albeit with a more intricate interpretation.

\subsection{Synchronous translation}
\label{sec:sync}

In this translation, each term $P$ is mapped to a formula $\ttypes{P}$ that
has a unique cut-free proof, which follows the syntactic structure of $P$.
We then prove, in Theorem~\ref{thm:exec-sync}, that $\ttypes{P}⊸\ttypes{Q}$ is
provable if and only if there is an execution $P→^*Q$.

\begin{definition}[type assignment]\label{def:ttypes}
  Terms of MCCS are translated into MLLa formulas as follows, where in each
  case $α$ is a fresh propositional variable:
  \begin{align*}
    \ttypes{1} &:= α^⊥⅋α \\
    \ttypes{P\para Q} &:= \ttypes{P}⊗\ttypes{Q} \\
    \ttypes{\actp{a}P} &:= \modp{a}(α^⊥⅋(\ttypes{P}⊗α))
      &&= \modp{a}(α⊸(\ttypes{P}⊗α)) \\
    \ttypes{\actn{a}P} &:= \modn{a}(\ttypes{P}⊗α^⊥)⅋α
      &&= \modp{a}(\ttypes{P}⊸α)⊸α
  \end{align*}
\end{definition}

The freshness condition is a way to enforce polymorphism in our translation:
each proposition variable occurs exactly twice, once in each polarity.
As observed in Section~\ref{sec:mlla}, the intended meaning is indeed a
universal quantification over these variables, in a context where we chose not
to use second order quantification for simplicity.
For the same reason, the type for $1$ is the type of an identity, where we
could have used the multiplicative unit $\mathbf{1}$ of linear logic
(indeed, $\mathbf{1}$ and $∀α(α^⊥⅋α)$ are equivalent formulas): we
deliberately restrict ourselves to a unit-free logic in order to avoid the
slight proof-theoretic complications of the units.

\begin{proposition}[proof assignment]\label{prop:tproofs}
  For every MCCS term $P$, there is a unique cut-free proof of $\ttypes{P}$.
  This proof will be denoted as $\tproofs{P}$.
\end{proposition}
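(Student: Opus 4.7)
The plan is to prove both existence and uniqueness by structural induction on $P$, leaning on two observations about the formula $\ttypes{P}$. First, every non-atomic subformula occurrence in $\ttypes{P}$ sits at a unique position in its syntax tree, so in any cut-free proof net of $⊢\ttypes{P}$ the non-axiom links ($⊗$, $⅋$, $\modp{a}$, $\modn{a}$) are entirely forced by the shape of the formula. Second, because Definition~\ref{def:ttypes} takes a fresh $α$ at every recursive step, each propositional variable occurring in $\ttypes{P}$ appears exactly twice, once as $α$ and once as $α^⊥$; so the axiom links are forced too. Hence, if a cut-free proof net of $⊢\ttypes{P}$ exists at all, it is unique, and the task reduces to exhibiting the candidate and verifying Danos--Regnier correctness.

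For the construction, the base case $P=1$ gives $\ttypes{1}=α^⊥⅋α$, whose candidate net is an axiom between the two atoms completed by a $⅋$ link, trivially correct. For $P=P_1\para P_2$, freshness makes the propositional variables of $\ttypes{P_1}$ and $\ttypes{P_2}$ disjoint, so no axiom crosses the top-level $⊗$; the net is obtained by glueing the inductive nets $\tproofs{P_1}$ and $\tproofs{P_2}$ with a $⊗$ link, and correctness is preserved by the standard $⊗$-composition of correct nets. For $P=\actp{a}Q$, the formula $\modp{a}(α^⊥⅋(\ttypes{Q}⊗α))$ forces, from the root upwards, a $\modp{a}$ link, then a $⅋$ link, then a $⊗$ link on its right premise, and finally an axiom between $α^⊥$ and $α$; plugging $\tproofs{Q}$ into the remaining premise of the $⊗$ completes the candidate. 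The case $P=\actn{a}Q$ is entirely symmetric.

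The main obstacle is checking Danos--Regnier correctness in the two action cases, where a fresh $⅋$ link and a fresh axiom sit on opposite sides of a fresh $⊗$ around the inductive net. For $\actp{a}Q$, each of the two switchings of the new $⅋$ must be inspected. When the edge from $α^⊥$ to the $⅋$ is cut, the graph stays connected through the chain of $\modp{a}$, $⅋$ and $⊗$ links, with the subnet $\tproofs{Q}$ hanging from one premise of the $⊗$ and the axiom bridging the other premise back to the dangling $α^⊥$. When instead the edge from the $⊗$ to the $⅋$ is cut, the axiom provides the single bridge from $α^⊥$ up through $α$ and the $⊗$ to $\tproofs{Q}$. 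In both cases, connectedness and acyclicity reduce to the inductive correctness of $\tproofs{Q}$, which attaches to the new structure only at the edge $\ttypes{Q}$, so no spurious cycle or disconnection can arise. This analysis also explains why $α$ must sit \emph{inside} the $⊗$ in Definition~\ref{def:ttypes}: any other placement of the bridging atom would break connectedness or create a cycle in one of the two switchings.
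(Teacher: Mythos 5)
Your proposal is correct and follows essentially the same route as the paper: existence by structural induction with the explicit nets for the prefix cases, and uniqueness from the freshness constraint forcing each variable to be closed by an atomic axiom with its unique dual occurrence, which in turn forces every connective to be introduced by a link. The only difference is that you spell out the Danos--Regnier check for the prefix cases, which the paper leaves implicit in its ``straightforward induction.''
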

\begin{proof}
  The existence of a proof of $\ttypes{P}$ is proved by a straightforward
  induction on $P$.
  The case of action prefixes is as follows:
\begin{center}
  $\tproofs{\actp{a}P}={}$
  \begin{tikzpicture}
    \useasboundingbox (-2,0.5) rectangle (2,5.5);
    \node {}
        child {node [link] {$\modp{a}$}
          child {node (par) [link] {$⅋$}
            child [missing] {}
            child {node (tens) [link] {$⊗$}
              child [level distance=12mm,sibling distance=2cm]
                {node (P) [subnet] {$\tproofs{P}$}}
              child [missing] {}}}};
    \draw (tens) to[out=45,in=135,looseness=2.5]
      node [pos=0.3,axiom] (ax) {} (par);
    \draw (ax) node [above left] {$α^⊥$} node[above right] {$α$};
  \end{tikzpicture}
  \hfil
  $\tproofs{\actn{a}P}={}$
  \begin{tikzpicture}
    \useasboundingbox (-3.5,0) rectangle (1,5);
    \node {}
        child {node (par) [link] {$⅋$}
          child {node [link] {$\modn{a}$}
            child {node (tens) [link] {$⊗$}
              child [level distance=9mm,sibling distance=17mm]
                {node [subnet] {$\tproofs{P}$}}
              child [missing] {}}}
          child [missing] {}};
    \draw (tens) to[out=45,in=45,looseness=2]
      node [pos=0.2,axiom] (ax) {} (par);
    \draw (ax) node [above left] {$α^⊥$} node[above right] {$α$};
  \end{tikzpicture}
\end{center}
  The key point for uniqueness is that the freshness constraint on
  propositional variables imposes that for each variable $α$ there must be an
  axiom link between the occurrence of $α$ and that of $α^⊥$, and subsequently
  that all connectives must be explicitly introduced in a proof of
  $\ttypes{P}$.
\end{proof}

This proof assignment property is formulated in the absence of locations.
We enrich it in the obvious way with location information: the modality link
$\modp{a}$ that corresponds to a prefix $\actp[ℓ]{a}P$ gets labelled by the
location $ℓ$.

Theorem~\ref{thm:exec-sync} below states the correspondence between process
execution and provability of transitions.
In order to go from proofs to executions, we need to read back terms from
proofs, in order to get a kind of reverse operation for $\tproofs{-}$.
A direct read-back is hard to formulate, however we can define a particular
case of read-back for a proof that is derived from some $\tproofs{P}$ if we
know $P$.

\begin{definition}[term extraction]\label{def:tterms}
  Let $P$ be a term of MCCS.
  An MLLa proof structure $π$ is said to be \emph{compatible} with $P$ if
  $\Locs(π)⊆\Locs(P)$, subjects and polarities coincide between $P$ and $π$
  and the order $≤_P$ is included in the order $≤_π$.

  For $π$ compatible with $P$ we define the term $\restr{P}{π}$ to be the
  MCCS term with locations $\Locs(π)$, subjects and polarities as in $P$ and
  $π$ and action order as in $P$.
\end{definition}

Remark that for all terms $P$, by construction $\tproofs{P}$ is
compatible with $P$ and we have $\restr{P}{\tproofs{P}}\equiv P$.
Note also that compatibility implies that the process is \emph{less}
constrained than the proof: given a proof $π$, the set of processes compatible
with $π$ contains terms that are more parallel than the structure of $π$ (and
in particular the process that is just a parallel composition of all actions
that correspond to modality links in $π$).
The stricter ordering in the proof $π$ effectively means that $π$ has made
decisions on how $P$ will run, as illustrated by the lemma below and
Theorem~\ref{thm:exec-sync}.

\begin{lemma}\label{lemma:tterms}
  Assume $P$ is a term and $π$ be a proof structure compatible with $P$.
  For any cut-elimination step $π→π'$, the structure $π'$ is compatible with
  $P$ and one of the two following situations occurs:
  \begin{itemize}
  \item either $π→π'$ is an elimination of modality links $\modp{a}$ and
    $\modn{a}$ at some locations $ℓ$ and $m$, then
    $\restr{P}{π}\redpi{\{(ℓ,m)\}}\restr{P}{π'}$ is a valid execution step,
  \item or it is another elimination step and
    $\restr{P}{π}\equiv\restr{P}{π'}$.
  \end{itemize}
\end{lemma}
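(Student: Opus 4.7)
The plan is to proceed by case analysis on the cut-elimination rule fired in $π→π'$. In MLLa these are the three principal reductions: an axiom cut, a multiplicative cut of $⊗$ against $⅋$, and a modality cut of $\modp{a}$ against $\modn{a}$. For each case I would check both compatibility of $π'$ with $P$ and the claimed relation between $\restr{P}{π}$ and $\restr{P}{π'}$.

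The modality cut is the substantive case. Two dual modality links at some locations $ℓ$ (negative polarity) and $m$ (positive polarity) are removed and replaced by a cut on their premisses, and they necessarily share the subject $a$ because they are dual. The first point is that $ℓ$ and $m$ are minimal in $≤_π$ among locations of $\Locs(π)$: being premisses of the eliminated cut, their links sit at the roots of their respective trees in the decomposition of $π$, so no other modality link lies strictly below either of them. The compatibility hypothesis $≤_P⊆≤_π$ on $\Locs(π)$ then forces $ℓ$ and $m$ to be minimal in $≤_P$ restricted to $\Locs(π)$, so both appear at top level in $\restr{P}{π}$. Hence $\restr{P}{π}≡\actn[ℓ]{a}P_1\para\actp[m]{a}P_2\para R$ for some $P_1,P_2,R$ built from the subordinate locations, and the rule of Definition~\ref{def:execution} fires, yielding $P_1\para P_2\para R$. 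Matching the resulting data (locations $\Locs(π)\setminus\{ℓ,m\}$, and subjects, polarities and action order inherited from $P$) shows this reduct coincides with $\restr{P}{π'}$.

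For the axiom and multiplicative cases no modality link is removed, so the sets of locations, subjects and polarities in $\restr{P}{π}$ and $\restr{P}{π'}$ already coincide, and the only point to address is the action order. Rerouting edges around an axiom cut does not touch the tree of any modality link. The multiplicative step splits the tree containing the eliminated $⊗$ into the two subtrees above its premisses (and likewise for $⅋$), but any two modality links separated by such a split had the eliminated $⊗$ or $⅋$ as their only common ancestor in the original tree, and hence were already incomparable in $≤_π$. Therefore $≤_π$ and $≤_{π'}$ agree on $\Locs(π)$, and $\restr{P}{π}≡\restr{P}{π'}$.

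Compatibility of $π'$ with $P$ then drops out in each case: $\Locs(π')⊆\Locs(P)$ is immediate, subjects and polarities of surviving locations are inherited from those of $π$, and $≤_P|_{\Locs(π')}⊆≤_{π'}$ reduces to the corresponding inclusion for $π$ (after restriction in the modality case). The one point to handle carefully is the claim that multiplicative cut elimination does not weaken $≤_π$ on modality links in a way that breaks compatibility; the key observation is that any order relation between two modality links is witnessed by a chain of logical ancestors, none of which can be the principal link of the eliminated cut because that link sits at the root of its subtree.
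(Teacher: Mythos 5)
Your case analysis is the same as the paper's, and your treatment of the modality and multiplicative cases is essentially the paper's argument (minimality of $ℓ$ and $m$ in $≤_π$, hence in $≤_P$, puts the two actions at top level; a multiplicative step only splits a tree at its root, so the order on locations is untouched). However, your claim about the axiom case --- that ``rerouting edges around an axiom cut does not touch the tree of any modality link'', hence that $≤_π$ and $≤_{π'}$ agree on $\Locs(π)$ --- is false. An axiom elimination grafts the tree whose root was cut against the axiom onto the position occupied by the axiom's other conclusion; every modality link of the grafted tree thereby enters the tree of premisses of every link below the grafting point, so $≤_{π'}$ may strictly contain $≤_π$. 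The paper insists on exactly this point: the remark immediately following the lemma states that this strengthening of the proof order under axiom elimination ``is the very reason why we need term extraction''.

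The error happens not to sink the conclusions, but for reasons you should make explicit rather than the one you give. First, $\restr{P}{π}$ takes its action order from $P$ (Definition~\ref{def:tterms}), not from $π$; so once $\Locs(π)$, subjects and polarities are unchanged --- which they are for axiom and multiplicative steps --- the extracted terms coincide regardless of what happens to $≤_π$. Your phrase ``the only point to address is the action order'' suggests you are reading the extracted order off $π$, which is a misreading of the definition. Second, compatibility only demands the inclusion $≤_P⊆≤_π$ on $\Locs(π)$, so strengthening the proof order is harmless; only a loss of order relations between surviving locations could break compatibility, and no cut-elimination step causes such a loss. With the axiom case restated as ``the proof order may be strictly strengthened, which preserves both the inclusion and the extracted term'', your proof becomes correct and coincides with the paper's.
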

\begin{proof}
  The case of an elimination of modality links follows from the definition of
  term extraction: the actions in the extracted term are necessarily at
  top level since the modalities are premisses of a cut so their locations are
  minimal with respect to $≤_π$ hence with respect to $≤_P$ too.
  In $π'$ the location set is the same with $ℓ$ and $m$ removed and the other
  data (subjects, polarities and proof order) are simply restricted to this
  new set, so compatibility with $P$ is preserved.

  The other cut elimination steps can be either multiplicative eliminations or
  axiom eliminations.
  In these cases the location set is unchanged.
  For multiplicative steps the proof order is unchanged too, while in the case
  of axiom eliminations the proof order is possibly strengthened, since one
  tree of links in the proof is merged on top of another.
  In both cases, this preserves compatibility with $P$.
\end{proof}

Remark that the strengthening of the proof order in the case of axiom
elimination, {i.e.} the fact that if $π→π'$ by an axiom elimination then $≤_π$
may be strictly included in $≤_{π'}$, is the very reason why we need term
extraction: in this case the
syntactic structure of the proof is more constrained than that of the process
term $P$ we are observing, so we cannot read back a reduct of $P$ without
extra information.

\begin{theorem}\label{thm:exec-sync}
  Let $P$ and $Q$ be two MCCS terms.
  There is an execution $P→^*Q$ if and only if $\ttypes{P}⊸\ttypes{Q}$ is
  provable in MLL (without modality rules) for some instantiation of the
  propositional variables of $\ttypes{P}$.
\end{theorem}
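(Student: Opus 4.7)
The plan is to prove the two directions separately, using Lemma~\ref{lemma:tterms} as the bridge between cut elimination and execution. For the forward direction, I would induct on the length of $P→^*Q$: the zero-step case reduces to $P\equiv Q$ and is settled by the identity proof of $\ttypes{P}⊸\ttypes{P}$ obtained by axiom expansion (no instantiation needed), while for $P→P_1→^*Q$ the inductive hypothesis yields a scheduler for $\ttypes{P_1}⊸\ttypes{Q}$ which composes by cut with a one-step scheduler for $\ttypes{P}⊸\ttypes{P_1}$.

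The content of the forward direction is thus constructing the one-step scheduler for a synchronisation $\actn[ℓ]{a}P''\para\actp[m]{a}Q''\para R → P''\para Q''\para R$. Writing $α$ and $β$ for the fresh propositional variables attached to the two prefixes, the two active subformulas of the premiss type are $\modn{a}(\ttypes{P''}⊗α^⊥)⅋α$ and $\modp{a}(β^⊥⅋(\ttypes{Q''}⊗β))$. Instantiating $α:=\ttypes{Q''}⊗\ttypes{P''}$ and $β:=\ttypes{P''}$ turns the two modality formulas into exact duals, so a single axiom link between them, combined with $η$-expanded identity axioms on $\ttypes{P''}$, $\ttypes{Q''}$ and the context $\ttypes{R}$ assembled by the appropriate $⊗$ and $⅋$ rules (up to the trivial commutativity swap between $\ttypes{Q''}⊗\ttypes{P''}$ and $\ttypes{P''}⊗\ttypes{Q''}$), yields a modality-free proof of $\ttypes{P}⊸\ttypes{P_1}$. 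Nested inductions (in the context $R$, inside prefixes) propagate the instantiation to the fresh variables coming from deeper actions.

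For the backward direction, given a scheduler $σ$ proving $\ttypes{P}⊸\ttypes{Q}$ under instantiation $θ$, I would cut $\tproofs{P}$ against $σ$ to form a proof $π_0$ of $\ttypes{Q}θ$ that is compatible with $P$ (since $σ$ contributes no modality links, locations and the order are only inherited from $\tproofs{P}$). Strong normalisation in MLL reduces $π_0$ to a cut-free $π_*$, and Lemma~\ref{lemma:tterms} shows that each reduction step either performs a modality elimination, contributing one execution step $\restr{P}{π}\redpi{}\restr{P}{π'}$, or leaves the extracted term invariant; collecting these gives $P→^*\restr{P}{π_*}$. The main obstacle is then to identify $\restr{P}{π_*}$ with $Q$ up to structural congruence. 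For this I would argue that any cut-free proof of $\ttypes{Q}θ$ must contain exactly one modality link per modality occurrence in $\ttypes{Q}$ (the instantiation $θ$ does not disturb the forced introduction of modality connectives), and that the tree-of-premisses order on these links coincides with the subformula order, which by construction of $\ttypes{\cdot}$ mirrors the action order of $Q$; since subjects and polarities of the surviving modality links are preserved through cut elimination from $\tproofs{P}$, the extracted term $\restr{P}{π_*}$ has the same locations, subjects, polarities and action order as $Q$, whence $\restr{P}{π_*}\equiv Q$.
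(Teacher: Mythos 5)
Your proposal is correct and follows essentially the same route as the paper: the forward direction reduces to a one-step scheduler obtained by instantiating the two fresh variables so that the two modality subformulas become exact duals joined by a single axiom link (the paper takes $α:=\ttypes{Q}$ and $β:=\ttypes{P}⊗\ttypes{Q}$, which avoids your extra commutativity isomorphism), and the backward direction cuts $\tproofs{P}$ against the scheduler, normalises, and reads back an execution via Lemma~\ref{lemma:tterms} together with the rigidity of cut-free proofs of $\ttypes{Q}$ (the paper invokes Proposition~\ref{prop:tproofs} where you re-derive the matching of modality links by hand). One small caution: the identity subproofs on $\ttypes{P''}$, $\ttypes{Q''}$ and $\ttypes{R}$ must stop $η$-expansion at every modality subformula and use an unexpanded axiom link there, since expanding through a modality would reintroduce modality rules and take the scheduler outside MLL.
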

\begin{proof}
  For the direct implication, remark that for a structural congruence
  $P\equiv Q$ at top level ({i.e.} not under prefixes), an implication
  $\ttypes{P}⊸\ttypes{Q}$ is provided by a standard isomorphism for the
  associativity and commutativity of the tensor, hence it is enough to handle the
  case of a reduction $(\actp{a}P\para\actn{a}Q)\para R→(P\para Q)\para R$.
  The expected type is
  \begin{multline*}
    \ttypes{(\actp{a}P\para\actn{a}Q)\para R} ⊸ \ttypes{(P\para Q)\para R} = \\
    ((\modn{a}(α⊗(\ttypes{P}^⊥⅋α^⊥)) ⅋ (\modp{a}(\ttypes{Q}^⊥⅋β)⊗β^⊥))
      ⅋ \ttypes{R}^⊥)
    ⅋ ((\ttypes{P} ⊗ \ttypes{Q}) ⊗ \ttypes{R})
  \end{multline*}
  so we get a proof as follows, if we instantiate $α$ as $\ttypes{Q}$ and $β$
  as $\ttypes{P}⊗\ttypes{Q}$:
  \begin{center}
    \begin{tikzpicture}[sibling distance=18mm,level distance=6mm]
      \node {}
        child {node [link] {$⅋$}
          child {node [link] (parR) {$⅋$}
            child {node [link] (par2) {$⅋$}
              child [missing] {}
              child {node [link] (tens) {$⊗$}}}
            child [missing] {}}
          child {node [link] (tensR) {$⊗$}}};
      \draw (par2) to [out=135,in=135,looseness=2.1]
        node [pos=0.6,axiom] (aax) {} (tens);
      \draw (aax) node [above left] {$A^⊥\!$} node[above right] {$A$};
      \draw (tens) to [out=45,in=135,looseness=1.5]
        node (tax) [pos=0.2,axiom] {} (tensR);
      \draw (tax) node [above left] {$B^⊥\!$} node[above right] {$B$};
      \draw (parR) to [out=45,in=45,looseness=2]
        node (axR) [pos=0.6,axiom] {} (tensR);
      \draw (axR) node [above left] {$\ttypes{R}^⊥\!$} node[above right] {$\ttypes{R}$};
    \end{tikzpicture}
    with
    $\left\{
      \begin{aligned}
        A &= \modp{a}(\ttypes{Q}^⊥⅋(\ttypes{P}⊗\ttypes{Q})) \\
        B &= \ttypes{P}⊗\ttypes{Q}
      \end{aligned}
    \right.$
  \end{center}

  For the reverse implication, consider an MLL proof $π$ of
  $\ttypes{P}^⊥,\ttypes{Q}$ for some instantiation of the variables of
  $\ttypes{P}$.
  If we cut this proof against $\tproofs{P}$ (where the type variables are
  instantiated as in $π$), we get a proof $ρ$ of $\ttypes{Q}$.
  By construction, the cut-elimination procedure reduces each proof into a
  cut-free proof with the same conclusion, so cut-elimination in $ρ$
  reaches a cut-free proof of $\ttypes{Q}$; by Proposition~\ref{prop:tproofs}
  there is only one such proof, hence cut-elimination of $ρ$ reaches
  $\tproofs{Q}$.
  Since $π$ is an MLL proof, it contains no modality link, hence $ρ$ is
  compatible with $P$ according to Definition~\ref{def:tterms} and by
  Lemma~\ref{lemma:tterms} we know that each cut-elimination step of $ρ$
  induces either a structural congruence or an execution step in terms
  extracted from $P$ by intermediate proofs, so from a cut-elimination
  sequence $\tproofs{P}→^*\tproofs{Q}$ we can extract a CCS-execution sequence
  $P→^*Q$, and we also deduce that $\restr{P}{\tproofs{Q}}=Q$.
\end{proof}

\subsection{Asynchronous executions}
\label{sec:async}

The tight correspondence of Theorem~\ref{thm:exec-sync} is obtained thanks to
the fact that the action order $≤_P$ of processes is mapped to proof order
$≤_{\tproofs{P}}$ because of syntactic structure of the type $\ttypes{P}$, in
which modalities are nested as in $P$.
The downside of this correspondence is that it does not leave much space for
modular reasoning about the behaviour on different channels in processes.
Besides, it does not account for proofs extracted from executions as in our
previous work~\cite{beffara-2012-proofs}.

In this section, in order to generalise this result, we provide an asynchronous
variant of the correspondence.
We discuss below the relevance of this variant.

\begin{definition}[asynchronous type assignment]\label{def:ttypea}
  Terms of MCCS are translated into MLLa formulas as follows, where in each
  case $α$ is a propositional variable:
  \begin{align*}
    \ttypea{1} &:= α^⊥⅋α \\
    \ttypea{P\para Q} &:= \ttypea{P}⊗\ttypea{Q} \\
    \ttypea{\actp{a}P} &:= \modp{a}α^⊥⅋(\ttypea{P}⊗α)
      &&= \modn{a}α⊸(\ttypea{P}⊗α) \\
    \ttypea{\actn{a}P} &:= (\ttypea{P}⊗α^⊥)⅋\modn{a}α
      &&= (\ttypea{P}⊸α)⊸\modn{a}α
  \end{align*}
\end{definition}

Note that this translation is the same as the synchronous translation of
Definition~\ref{def:ttypes}, except for the position of modalities.

\begin{proposition}[asynchronous proof assignment]
  For every MCCS term $P$, there is a unique cut-free proof of $\ttypea{P}$.
  This proof will be denoted as $\tproofa{P}$.
\end{proposition}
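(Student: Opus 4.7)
The plan is to mimic the proof of Proposition~\ref{prop:tproofs} almost verbatim, since the only change from $\ttypes{-}$ to $\ttypea{-}$ is the placement of the modality link (on the bare variable rather than wrapping the whole subformula), while the overall pattern of axiom-introduced fresh variables is preserved. So I would proceed by induction on $P$.

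For existence, the base case $P=1$ is just the obvious proof of $α^⊥⅋α$ using an axiom followed by a $⅋$ link. For parallel composition, the induction hypothesis gives proofs $\tproofa{P}$ and $\tproofa{Q}$, and since the fresh variables on the two sides are distinct, a single $⊗$ link combines them into a proof of $\ttypea{P}⊗\ttypea{Q}$. For the prefix case $\actp{a}P$, I would take an axiom $\vdash α^⊥,α$ with a fresh $α$, apply a $\modp{a}$ link to the $α^⊥$ conclusion (yielding $\modp{a}α^⊥$), then tensor the remaining $α$ with $\tproofa{P}$ via a $⊗$ link, and finally combine the two conclusions with $⅋$. The prefix case $\actn{a}P$ is symmetric, with the $\modn{a}$ link moved to the other side of the axiom. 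In both prefix cases the new modality link is labelled by the location $\ell$ of the corresponding action prefix.

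For uniqueness, the key observation (identical to that of Proposition~\ref{prop:tproofs}) is the freshness constraint: each propositional variable introduced by the translation occurs exactly once positively and once negatively in $\ttypea{P}$, so in any cut-free proof these two occurrences must be connected by a single axiom link. Once the axiom links are pinned down, every $⅋$, $⊗$, $\modp{a}$ and $\modn{a}$ in the formula must be introduced by its unique corresponding link, and by the standard proof-net reading there is at most one way to do so (up to irrelevant commutations already absorbed by working in the proof-net formalism).

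The only substantive check — and essentially the only place where the asynchronous case differs — is to verify that the modality links can indeed appear in the positions required by $\ttypea{-}$. In the synchronous case the $\modp{a}$ link was the \emph{last} link above a cut boundary, wrapping an entire subproof; here it sits directly above one end of an axiom link, inside the subproof built for the prefix. I expect this to cause no difficulty since modality rules in MLLa commute with all other rules, but it is the one step I would double-check: specifically, that placing $\modp{a}$ on $α^⊥$ before tensoring its dual $α$ with $\tproofa{P}$ yields a correct proof net (the switching-graph condition is trivial because the modality link has a single premise and does not alter connectedness or acyclicity). With this in place, both existence and uniqueness follow exactly as in Proposition~\ref{prop:tproofs}.
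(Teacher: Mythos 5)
Your proposal is correct and matches the paper's proof, which likewise just invokes the argument of Proposition~\ref{prop:tproofs} and exhibits the same proof nets for the prefix cases (modality link placed on one end of the fresh axiom, then $⊗$ with $\tproofa{P}$, then $⅋$). The extra check you flag about the modality sitting above an axiom link is harmless and indeed trivial, since a unary link cannot affect the switching-graph condition.
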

\begin{proof}
  The argument is the same as in Proposition~\ref{prop:tproofs}.
  The case of action prefixes is now as follows:
\begin{center}
  $\tproofa{\actp{a}P}={}$
  \begin{tikzpicture}
    \node {}
        child {node [link] {$⅋$}
          child {node (mod) [link] {$\modp{a}$}}
          child {node (tens) [link] {$⊗$}
            child [level distance=12mm,sibling distance=23mm]
              {node [subnet] {$\tproofa{P}$}}
            child [missing] {}}};
    \draw (mod) to[out=90,in=45,looseness=1.5]
      node [pos=0.6,axiom] (ax) {} (tens);
    \draw (ax) node [above left] {$α^⊥\!\!$} node[above right] {$α$};
  \end{tikzpicture}
  \hfil
  $\tproofa{\actn{a}P}={}$
  \begin{tikzpicture}
    \node {}
        child {node [link] {$⅋$}
          child {node (tens) [link] {$⊗$}
            child [level distance=9mm,sibling distance=17mm]
              {node [subnet] {$\tproofa{P}$}}
            child [missing] {}}
          child {node (mod) [link] {$\modn{a}$}
            child [missing] {}}};
    \draw (tens) to[out=45,in=90,looseness=2]
      node [pos=0.6,axiom] (ax) {} (mod);
    \draw (ax) node [above left] {$β^⊥$} node[above right] {$β$};
  \end{tikzpicture}
\end{center}
\end{proof}

\begin{proposition}\label{exec-async}
  For all MCCS reduction $P→Q$, the formula $\ttypea{P}⊸\ttypea{Q}$ is
  provable in MLL for some instantiation of the variables of $\ttypea{P}$.
\end{proposition}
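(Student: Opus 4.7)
The plan is to mirror the proof of Theorem~\ref{thm:exec-sync}: decompose the reduction into a structural-congruence step, a single synchronisation, and another structural congruence, then prove the corresponding implications in MLL and compose them via cuts.

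For structural congruence, the associativity and commutativity of $\para$ correspond to standard MLL isomorphisms for $⊗$. Neutrality of $1$ is handled by a choice of instantiation of the fresh variable in $\ttypea{1}=α^⊥⅋α$: taking $α:=\ttypea{P}$ makes $\ttypea{P\para 1}⊸\ttypea{P}$ provable by a small proof net with three axioms on $\ttypea{P}$. Since congruence is closed under contexts, the same idea extends through prefixes by choosing analogous instantiations for the inner fresh variables.

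The genuine computational content lies in a redex of the form $\actp{a}P_1\para\actn{a}P_2\para R→P_1\para P_2\para R$. Let $α$ and $β$ be the fresh variables introduced respectively by $\ttypea{\actp{a}P_1}$ and $\ttypea{\actn{a}P_2}$. The plan is to unify them by the instantiation $β:=α$, so that
\[
  \ttypea{\actp{a}P_1}=\modp{a}α^⊥⅋(\ttypea{P_1}⊗α),
  \qquad
  \ttypea{\actn{a}P_2}=(\ttypea{P_2}⊗α^⊥)⅋\modn{a}α.
\]
The implication is then witnessed by a proof net whose axiom links pair the dual modality formulas $\modp{a}α^⊥$ and $\modn{a}α$ as a single axiom on a whole modality formula (legitimate since MLL has no decomposition rule for modalities, which therefore behave as generalised atoms); the residual $α$ and $α^⊥$ coming from the $⊗$ subformulas of the two prefixes; and, via $η$-expanded axioms, each of $\ttypea{P_1}$, $\ttypea{P_2}$ and $\ttypea{R}$ with its copy on the right-hand side. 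The surrounding $⊗$/$⅋$ skeleton is fixed by the shape of the conclusion, and crucially no modality introduction rule is used, so the resulting proof lies in pure MLL.

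The main obstacle is verifying the Danos-Regnier criterion on this net; this is a routine case analysis on the few $⅋$-switchings involved. Each axiom connects leaves lying in complementary branches of the relevant switchings, so the graph remains connected and acyclic. Composing the three implications (structural, synchronisation, structural) by cuts gives the required proof of $\ttypea{P}⊸\ttypea{Q}$.
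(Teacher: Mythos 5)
Your overall plan --- isolate one top-level redex, treat structural congruence by the monoidal isomorphisms of $⊗$, and compose the pieces by cuts --- is the same as the paper's, and you are right that the two modality subformulas must be consumed by a single axiom on the whole modality formula, since the schedule contains no modality rule. The gap is the instantiation $β:=α$ with $α$ kept atomic: under it the axiom links are forced, and the forced proof structure is \emph{not} a proof net. Indeed, once the axiom $\modn{a}α$--$\modp{a}α^⊥$ is placed, the only remaining occurrences of the variable are the $α^⊥$ under the par $\ttypea{P_1}^⊥⅋α^⊥$ (premiss of the first tensor) and the $α$ under the par $\ttypea{P_2}^⊥⅋α$ (premiss of the second tensor), so they must be joined by a second axiom. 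The two subtrees rooted at the tensors $\modn{a}α⊗(\ttypea{P_1}^⊥⅋α^⊥)$ and $(\ttypea{P_2}^⊥⅋α)⊗\modp{a}α^⊥$ are then connected by two distinct axioms: the modality axiom, which travels only along tensor premisses, and the $α$--$α^⊥$ axiom, which uses one premiss of each of the two pars. The switching that keeps exactly those two premisses contains a cycle, so the Danos-Regnier criterion fails; the check you defer as ``routine'' is precisely where the construction breaks, and with this instantiation the implication is in fact unprovable.

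The paper's proof avoids the problem by instantiating both $α$ and $β$ with the concrete formula $\ttypea{P_2}$, the continuation type of the negative prefix. Then $\ttypea{P_1}^⊥⅋α^⊥$ becomes $(\ttypea{P_1}⊗\ttypea{P_2})^⊥$ and is linked by one non-atomic axiom directly to the occurrence of $\ttypea{P_1}⊗\ttypea{P_2}$ on the right-hand side of the implication, while $\ttypea{P_2}^⊥⅋β$ becomes $\ttypea{P_2}^⊥⅋\ttypea{P_2}$ and is closed off locally by a $⅋$ over an axiom. Only the modality axiom then crosses between the two prefix subtrees, every switching is acyclic and connected, and the net is correct. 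So the choice of instantiation is not a detail to be verified after the fact: rerouting the axiom links through a non-atomic instantiation is the crux of the argument.
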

\begin{proof}
  For any structural congruence $P\equiv Q$, the implication
  $\ttypea{P}→\ttypea{Q}$ is proved by a standard isomorphism for the
  monoidal structure of the tensor.
  For an execution step
  $(\actp{a}P\para\actn{a}Q)\para R→(P\para Q)\para R$,
  we prove the implication:
  \begin{multline*}
    \ttypea{(\actp{a}P\para\actn{a}Q)\para R} ⊸ \ttypea{(P\para Q)\para R} = \\
    ((\modn{a}α⊗(\ttypea{P}^⊥⅋α^⊥) ⅋ ((\ttypea{Q}^⊥⅋β)⊗\modp{a}β^⊥))
      ⅋ \ttypea{R}^⊥)
    ⅋ ((\ttypea{P} ⊗ \ttypea{Q}) ⊗ \ttypea{R})
  \end{multline*}
  with the following proof:
  \begin{center}
    \begin{tikzpicture}[sibling distance=18mm,level distance=5mm]
      \node {}
        child {node [link] {$⅋$}
          child {node [link] (parR) {$⅋$}
            child {node [link] (parPQ) {$⅋$}
              child {node [link] (tensP) {$⊗$}}
              child {node [link] (tensQ) {$⊗$}
                child {node [link] (parQ) {$⅋$}}
                child [missing] {}}}
            child [missing] {}}
          child {node [link] (tensR) {$⊗$}}};
      \draw (tensP) to [out=135,in=180] (-5,8) to [out=0,in=45,looseness=1.4]
        node [pos=0,axiom] (axA) {} (tensQ);
      \draw (axA) node [above left] {$A^⊥\!$} node[above right] {$A$};
      \draw (tensP) to [out=45,in=180] (-2.5,8) to [out=0,in=135,looseness=0.8]
        node [pos=0,axiom] (axB) {} (tensR);
      \draw (axB) node [above left] {$B^⊥\!$} node[above right] {$B$};
      \draw (parQ) to [out=135,in=45,looseness=4]
        node [pos=0.5,axiom] (axQ) {} (parQ);
      \draw (axQ) node [above left] {$\ttypea{Q}^⊥\!$} node[above right] {$\ttypea{Q}$};
      \draw (parR) to [out=45,in=45,looseness=2]
        node [pos=0.6,axiom] (axR) {} (tensR);
      \draw (axR) node [above left] {$\ttypea{R}^⊥\!$} node[above right] {$\ttypea{R}$};
    \end{tikzpicture}
    \quad with
    $\left\{
      \begin{aligned}
        α &= \ttypea{Q} \\
        β &= \ttypea{Q} \\
        A &= \modp{a}\ttypea{Q}^⊥ \\
        B &= \ttypes{P}⊗\ttypes{Q}
      \end{aligned}
    \right.$
  \end{center}
\end{proof}

\begin{lemma}\label{step-async}
  Let $P$ be an MCCS term with at least one action prefix.
  If $\ttypea{P}⊸\ttypea{1}$ is provable in MLL for some instantiation of the
  propositional variables, then there exists a reduction $P→Q$ such that
  $\ttypea{Q}⊸\ttypea{1}$ is also provable.
\end{lemma}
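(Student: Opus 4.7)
The plan is to extract, from the given MLL proof $σ$ of $\vdash \ttypea{P}^⊥, \ttypea{1}$, a pair of matching top-level prefixes in $P$ that we can reduce, and then build a proof for the reduct.

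I would first observe that, since $\ttypea{1}$ has no modality subformulas and $σ$ contains no modality rules, every modality subformula of $\ttypea{P}^⊥$ must be closed by a direct (non-$η$-expanded) axiom link placed on the modality formula itself. Duality then forces each such axiom to pair a $\modn{a}B$ leaf with a $\modp{a}B^⊥$ leaf, which, tracing back the translation of Definition~\ref{def:ttypea}, corresponds to pairing a positive prefix $\actp[ℓ]{a}$ with a negative prefix $\actn[m]{a}$ in $P$. Since $P$ contains at least one action, at least one such matched pair $(ℓ,m)$ exists.

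The central step is to show that some such matched pair $(ℓ,m)$ has both members $≤_P$-minimal, so that the execution step $P \redpi{\{(ℓ,m)\}} Q$ is a valid reduction. For this I would invoke the Danos--Regnier criterion on $σ$ read as a proof net. Each top-level prefix in $P$ contributes a subformula of $\ttypea{P}^⊥$ whose modality leaf is coupled with its continuation subformula through a fresh propositional variable $α$ that occurs exactly once positively and once negatively, and hence carries an atomic axiom of its own. Assuming for contradiction that every top-level prefix is matched against a strictly nested partner, I expect to exhibit a switching of the net in which the continuation of some topmost ``problematic'' prefix is detached from the axiom chain linking its modality to the deeper partner, producing a disconnected component and violating the criterion. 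Making this combinatorial case analysis rigorous, with careful bookkeeping of which fresh variable corresponds to which prefix under the chosen instantiation, is the main obstacle.

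Once a top-level pair $(ℓ,m)$ has been identified, define $Q$ by reducing $P$ at that pair, and build a proof $σ'$ of $\ttypea{Q} ⊸ \ttypea{1}$ by surgery on $σ$. Concretely, one removes the modality axiom on the matched pair together with the two atomic axioms associated with the shared variables of $\actp[ℓ]{a}$ and $\actn[m]{a}$, then reroutes the resulting loose ends so that the leaves of $\ttypea{P_ℓ}^⊥$ and $\ttypea{P_m}^⊥$ are exposed directly under the preserved multiplicative structure of $\ttypea{Q}^⊥$. Checking that the resulting proof structure is still a valid MLL proof net is routine, because the modifications are local and the remaining switchings inherit connectedness and acyclicity from those of $σ$.
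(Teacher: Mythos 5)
Your overall strategy matches the paper's: modality leaves of $\ttypea{P}^⊥$ must be closed by axiom links (since the schedule contains no modality rules), such an axiom pairs dual prefixes, and the crux is to show that some pair is matched with both members at top level. But precisely at that crux --- which you yourself flag as ``the main obstacle'' --- the argument is missing, and the route you hint at points the wrong way. You propose to refute the hypothesis ``every top-level prefix is matched against a strictly nested partner'' by exhibiting a \emph{disconnected} switching in which a continuation is ``detached from the axiom chain''. This is unlikely to go through as stated: in $\ttypea{\actp{a}P}^⊥=\modn{a}α⊗(\ttypea{P}^⊥⅋α^⊥)$ the modality leaf and the continuation hang under a common $⊗$, whose premisses survive every switching, so the continuation cannot be detached from its modality; and even granting that an incorrect structure must admit some disconnected switching, your sketch gives no way to locate the isolated component. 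The paper's contradiction is an \emph{acyclicity} violation, and this is where the real content of the lemma lies: write $P=\actp{a_1}P_1\para\cdots\para\actp{a_n}P_n$, let $f(i)$ be the index of the subtree $π_{f(i)}$ containing the axiom partner of the $i$-th top-level modality leaf; since there are finitely many subtrees the orbit $1,f(1),f^2(1),\dots$ contains a cycle $i,f(i),\dots,f^k(i)=i$, and following axiom, tensor, axiom, tensor around this cycle yields a closed path that crosses each tree $π_j$ straight from a leaf to its root, hence uses at most one premiss of each $⅋$; some switching therefore retains the entire cycle, contradicting the Danos--Regnier criterion. Without this (or an equivalent) argument the lemma is not proved.

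Two smaller remarks. Your duality argument for axiom partners implicitly assumes the partner of a $\modn{a}B$ leaf lies in $\ttypea{P}^⊥$ rather than inside the instantiated $\ttypea{1}$; this deserves a word (the paper is equally terse here). For the final step, the paper does not perform surgery on the schedule but runs a few explicit cut-elimination steps of $\tproofa{P}$ against $π$ and recognises the reduct as $\tproofa{P_1\para P_2\para P'}$ cut against a smaller schedule; your surgery should also work, but ``routine'' is optimistic, since besides the modality axiom and the two variable axioms you must delete the $⊗$/$⅋$ pair above each consumed prefix and re-expose $\ttypea{P_1}^⊥$ and $\ttypea{P_2}^⊥$ at top level while re-verifying correctness of the modified structure.
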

Note that the condition on $P$ is that it is not already congruent to $1$,
otherwise $\ttypea{P}⊸\ttypea{1}$ is provable in MLL (as well as the reverse
implication) but obviously $P$ does not reduce.
However, the statement does \emph{not} require that $P$ must reducible ({i.e.}
that it is not in “normal form” for execution): the existence of a proof of
$\ttypea{P}⊸\ttypea{1}$ does imply this fact.
\begin{proof}
  Consider a proof $π$ of $\ttypea{P}^⊥⅋\ttypea{1}$ for some instantiation of
  the propositional variables of $\ttypea{P}$.
  MLL admits cut elimination and expansion of axiom links, so we may assume
  without loss of generality that $π$ is cut-free and that all multiplicative
  connectives in $\ttypea{P}^⊥$ are introduced by $π$.

  The term $P$ can be written as a composition of action prefixes
  $\actp{a_1}P_1\para\cdots\para\actp{a_n}P_n$ (where the $a_i$ are actions of
  arbitrary polarities) so $π$ is decomposed as follows, with some
  tensors possibly reversed depending on the polarities of the $a_i$:
  \begin{center}
    \begin{tikzpicture}[level distance=7mm]
      \tikzset{subtree/.style={
          draw,shape=isosceles triangle,anchor=apex,
          edge from parent/.style={draw,child anchor=apex},
          isosceles triangle apex angle=80,
          shape border rotate=-90,}}
      \node [link] {$⅋$}
        child {node [link] {$⊗$}
          child [subtree] {node [subtree] {$π_1$}}
          child {node {$\modn{a_1}α_1$}}}
        child {node {$\cdots$}}
        child {node [link] {$⊗$}
          child [subtree] {node [subtree] {$π_n$}}
          child {node {$\modn{a_n}α_n$}}}
        child {node {$α^⊥$}}
        child {node {$α$}};
    \end{tikzpicture}
  \end{center}
  with axiom links on top (the $⅋$ link of arity $n+2$ in the picture
  actually stands for some tree of binary $⅋$ links, corresponding to the
  bracketing of parallel compositions in $P$, since our language only includes
  binary connectives).
  Note that, since $π$ does not use modality rules, all premisses
  $\modn{a_i}α_i$ must be introduced by axiom links.

  We now argue that two of these are actually connected by an axiom rule.
  Towards a contradiction, assume that it is not the case: every
  $\modn{a_i}α_i$ is introduced by an axiom link whose other conclusion is a
  leaf in one of the subproofs $π_{f(i)}$.
  Since there are finitely many subproofs, the sequence $1,f(1),f^2(1),…$ is
  eventually periodic.
  Consider a minimal cycle $i,f(i),f^2(i),…,f^k(i)=i$.
  Up to the reordering of sub-terms of $P$, let us assume this sequence is
  $1,2,…,k,1$.
  Then we have the following situation:
  \begin{center}
    \begin{tikzpicture}[level distance=7mm]
      \node [link] {$⅋$}
        child {node [link] {$⊗$}
          child [subtree] {node [subtree] (p1) {$π_1$}}
          child {node (a1) {$\modn{a_1}α_1$}}}
        child [missing] {}
        child {node [link] {$⊗$}
          child [subtree] {node [subtree] (p2) {$π_2$}}
          child {node (a2) {$\modn{a_2}α_2$}}}
        child [missing] {}
        child {node [link] {$⊗$}
          child [subtree] {node [subtree] (p3) {$π_3$}}
          child {node (a3) {$\modn{a_3}α_3$}}}
        child [missing] {}
        child {node {$\cdots$}}
        child [missing] {}
        child {node [link] {$⊗$}
          child [subtree] {node [subtree] (pk) {$π_k$}}
          child {node (ak) {$\modn{a_k}α_k$}}};
      \draw (a1) to [out=90,in=90,looseness=2] node[pos=0.7,axiom] {} (p2.140)
        decorate [subtree path] { -- (p2.apex)};
      \draw (a2) to [out=90,in=90,looseness=2] node[pos=0.7,axiom] {} (p3.140)
        decorate [subtree path] { -- (p3.apex)};
      \draw[dashed] (a3) to [out=90,in=90,looseness=0.7] (pk.140);
      \draw decorate [subtree path] {(pk.140) -- (pk.apex)};
      \draw (ak) to [out=60,in=0,looseness=0.7] (-1,6) node [axiom] {}
        to [out=180,in=90,looseness=0.5] (p1.140)
        decorate [subtree path] { -- (p1.apex)};
    \end{tikzpicture}
  \end{center}
  Call $t_i$ the tensor link under the node $\modn{a_i}α_i$, and $x_i$ the
  axiom link that has $\modn{a_i}α_i$ as one of its premisses.
  Then we have a cycle that goes from $t_1$ up to $x_1$ into $π_2$, down to
  $t_2$, up to $x_2$ into $π_3$ and so on until $t_k$, up to $x_k$ and back
  into $π_1$, down to $t_1$.
  This cycle traverses each tree $π_i$ straight from a leaf to the root, so it
  only goes through at most one premiss of each $⅋$ link.
  Hence there is a $⅋$-switching of $π$ that has a cycle, which violates the
  Danos-Regnier correctness criterion for multiplicative proof
  nets~\cite{danos-1989-structure}.

  Hence there is an axiom link between two of the $\modn{a_i}α_i$.
  For simplicity, assume that the indices are $1$ and $2$.
  Then $a_1$ and $a_2$ are dual, so $P$ can be written
  $\actn{a}P_1\para\actp{a}P_2\para P'$, and we have
  $P→P_1\para P_2\para P'$.
  Moreover the cut between $\tproofa{P}$ and $π$ has the following shape:
  \begin{center}
    \begin{tikzpicture}[
        level 1/.style={sibling distance=12mm},
        level 2/.style={sibling distance=10mm}]
      \draw (-12,0) node [link] (r1) {$⊗$}
        child {node [link] {$⅋$}
          child {node [link] (ap1) {$\modn{a}$}}
          child {node [link] (tp1) {$⊗$}
            child [level distance=12mm] {node [subnet] {$\tproofa{P_1}$}}
            child [missing] {}}}
        child [missing] {}
        child {node [link] {$⅋$}
          child {node [link] (tp2) {$⊗$}
            child [level distance=12mm] {node [subnet] {$\tproofa{P_2}$}}
            child [missing] {}}
          child {node [link] (ap2) {$\modp{a}$}}}
        child {node [subnet] {$\tproofa{P'}$}};
      \draw (ap1) to [out=90,in=45,looseness=1.5] node [pos=0.3,axiom] {} (tp1);
      \draw (tp2) to [out=45,in=90,looseness=1.8] node [pos=0.6,axiom] {} (ap2);
      \node [subnet,minimum width=4cm] at (0,5) (pi) {$π'$};
      \draw [level distance=8mm,level 1/.style={sibling distance=12mm}]
       node [link] (r2) {$⅋$}
        child {node [link] (t1) {$⊗$}
          child [missing] {}
          child {node [link] (p1) {$⅋$}}}
        child [missing] {}
        child {node [link] (t2) {$⊗$}
          child {node [link] (p2) {$⅋$}}
          child [missing] {}}
        child [missing] {};
      \draw (t1) to [out=135,in=45,looseness=0.8]
        node [pos=0.5,axiom] (ax) {} (t2);
      \draw (r1) to [out=-90,in=-90,looseness=0.3] (r2);
      \draw (p1) to [out=135,in=-90] (pi.190);
      \draw (p1) to [out=45,in=-90] (pi.195);
      \draw (p2) to [out=135,in=-90] (pi.250);
      \draw (p2) to [out=45,in=-90] (pi.320);
      \draw (r2) to [out=15,in=-90] (pi.350);
    \end{tikzpicture}
  \end{center}
  After a few steps of cut elimination (five pairs of multiplicatives, three
  axioms and a pair of modalities) this proof eventually reduces into $π'$ cut
  against $\tproofa{P_1}$, $\tproofa{P_2}$ and $\tproofa{P'}$ plus a cut
  between the two remaining ports of $π'$.
  This reduct is also a reduct of $\tproofa{P_1\para P_2\para P'}$ against
  $π'$ plus two $⅋$ rules.
  The latter is thus an MLL proof of
  $\ttypea{P_1\para P_2\para P'}⊸\ttypea{1}$.
\end{proof}

\begin{theorem}\label{thm:exec-async}
  Let $P$ be an MCCS term.
  There is an execution $P→^*1$ if and only if $\ttypea{P}⊸\ttypea{1}$ is
  provable in MLL (without modality rules) for some instantiation of the
  propositional variables.
\end{theorem}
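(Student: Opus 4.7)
The plan is to prove each direction by induction, relying on Proposition~\ref{exec-async} for the forward implication and on Lemma~\ref{step-async} for the reverse one.

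For the forward direction, I would induct on the length of the reduction sequence $P→^*1$. The base case ($P\equiv 1$) is handled by the standard monoidal isomorphism for tensor already used in Proposition~\ref{exec-async}: any parallel composition of $1$'s is congruent to $1$ and its type is isomorphic to $\ttypea{1}$. For the step case $P→Q→^*1$, Proposition~\ref{exec-async} gives an MLL proof of $\ttypea{P}⊸\ttypea{Q}$ for some instantiation of the variables of $\ttypea{P}$, and the induction hypothesis yields an MLL proof of $\ttypea{Q}⊸\ttypea{1}$. Cutting the two on $\ttypea{Q}$ gives the desired proof, which uses no modality rule because neither ingredient does. The one bookkeeping point is to align the two instantiations: the variables occurring in $\ttypea{Q}$ are fresh by construction, so one can uniformly substitute on both sides so that the chosen instantiations agree before cutting.

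For the reverse direction, I would induct on the number $n$ of action prefixes in $P$. If $n=0$, then $P$ is a parallel composition of $1$'s, hence structurally congruent to $1$, and $P→^*1$ holds in zero steps. If $n\ge 1$, Lemma~\ref{step-async} immediately produces a reduction $P→Q$ such that $\ttypea{Q}⊸\ttypea{1}$ remains MLL-provable; since an MCCS reduction consumes exactly one positive and one negative prefix, $Q$ has $n-2$ prefixes and the induction hypothesis delivers $Q→^*1$, whence $P→Q→^*1$.

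The substance of the reverse direction is thus entirely packaged inside Lemma~\ref{step-async}, which via the Danos-Regnier correctness criterion forces an axiom link between two dual modality atoms at the top level of $\ttypea{P}$ and thereby exhibits a concrete redex in $P$. Granted that lemma, the theorem itself is a routine induction, the only minor care needed being the handling of the ``for some instantiation'' quantifier sketched above.
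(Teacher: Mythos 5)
Your proposal is correct and follows essentially the same route as the paper: the forward direction composes the one-step implications from Proposition~\ref{exec-async} (the paper likewise notes that correctness is preserved under instantiation, which is your ``alignment'' step), and the reverse direction is the same induction on the number of action prefixes with Lemma~\ref{step-async} supplying the inductive step and the congruence-to-$1$ base case. No gaps to report.
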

\begin{proof}
  For the direct implication, Proposition~\ref{exec-async} provides the
  implication for one step of execution.
  The result follows immediately since linear implications properly compose
  and the correctness of proofs is preserved by instantiation of propositional
  variables.

  For the reverse implication, we can reason by induction on the number of
  action prefixes in $P$.
  The base case of a term with no action is vacuous since the neutral process
  $1$ is the only such term, up to structural congruence.
  Lemma~\ref{step-async} provides the induction step.
\end{proof}

The interpretation of Theorem~\ref{thm:exec-async} is more subtle than for
Theorem~\ref{thm:exec-sync}, because in this variant there is no step-by-step
correspondence between execution and cut elimination.
Of course, Lemma~\ref{step-async} does prove that there is a \emph{strategy}
for eliminating cuts that corresponds to an execution, but in general the
normalisation of $\tproofa{P}$ against a particular proof of
$\ttypea{P}⊸\ttypea{1}$ may consume modalities in $P$ in an arbitrary order.
This is why Theorem~\ref{thm:exec-async} requires the final type to be
$\ttypea{1}$: the statement like of Theorem~\ref{thm:exec-sync} is actually
false for the asynchronous variant.

\begin{definition}
  Let $⇒$ be the relation over processes such that $P⇒Q$ holds whenever
  $\ttypea{P}⊸\ttypea{Q}$ is provable in MLL for some instantiation of the
  propositional variables of $\ttypea{P}$.
\end{definition}

By Proposition~\ref{exec-async} we know that $P→^*Q$ implies $P⇒Q$, but other
rules are easily provable, like the following:
\begin{center}
  \begin{prooftree}
    \Hypo{ P &→ Q }
    \Infer1{ P &⇒ Q }
  \end{prooftree}
\hfil
  \begin{prooftree}
    \Hypo{ P &⇒ Q }
    \Infer1{ a.P &⇒ a.Q }
  \end{prooftree}
\hfil
  $\actp{u}\actp{a}P\para\actp{v}\actn{a}Q ⇒ \actp{u}P\para\actp{v}Q$
\end{center}
Therefore $⇒$ can make decisions in advance about the execution of a process,
and update the visible part of the process accordingly.
In other words, $⇒$ is an fully asynchronous form of execution: it allows to
perform an interaction as soon as this interaction may be part of some full
execution.

\section{Discussion}
\label{sec:discussion}

We thus have defined two type assignment systems for MCCS processes into MLLa
that enjoy good properties relating cut-elimination and execution.
It appears that processes correspond to cut-free proofs using modality links,
while schedules correspond to pure MLL proofs without modality links.

Although the technical details of the paper are limited to multiplicative CCS,
it is clear that the same approach extends to the full calculus, including
replication and sum, with a type assignment like
\begin{align*}
  \ttypea{\oc P} &= \oc\ttypea{P} &
  \ttypea{P+Q} &= \ttypea{P}\with\ttypea{Q}
\end{align*}
so that, for replication, a schedule has to decide the number of copies of $P$
it will use, and for the sum it will have to decide which side of the additive
conjunction will be effectively used.
This leads to a correspondence using full propositional linear logic; the
theory of proof nets is harder in this case but the principles of the
correspondence remain the same.

We now conclude with ideas for research directions open by the present
work.

\subsection{Semantics of processes}
\label{sec:semantics}

The main source of non-determinism is the fact that a given
action name may occur several times in a given term, and locations are used to
name the different occurrences.

The annotation $c$ in an execution step $P\redpi{c}Q$ describes which
occurrences interact.
Remark that, for a given $P$ and $c$, there is at most one $Q$
such that $P\redpi{c}Q$, since $c$ describes the interaction completely.

\begin{definition}[pairing]\label{def:pairing}
  A \emph{pairing} of a term $P$ is a partial involution $c$ over $\Locs(P)$
  such that for all $\ell\in\domain c$,
  $\subj{c(\ell)}=\subj{\ell}$ and
  $\pol{c(\ell)}=-\pol{\ell}$.

  Let $\sim_c$ be the smallest equivalence that contains
  $c$.
  $c$ is \emph{consistent} if $\domain{c}$ is downward closed for
  $≤_P$ and $\sim_c<_P\sim_c$ is acyclic.
\end{definition}

\begin{example}\label{ex:pairing}
The total pairings of $P = a^1.c^2 \para b^3.\bar{a}^4 \para
\bar{b}^5.\bar{c}^6 \para a^7.\bar{b}^8 \para b^9 \para \bar{a}^0 $ are \\
$c_1=\{ (9,5), (1,0), (2,6), (3,8), (4,7) \}$,
$c_2=\{ (3,5), (1,4), (2,6), (7,0), (9,8) \}$,\\
$c_3=\{ (1,4), (3,8), (7,0), (9,5), (2,6) \}$,
$c_4=\{ (1,0), (3,5), (7,4), (9,8), (2,6) \}$.\\
Only $c_1$ is inconsistent as there is a cycle induced by $\{(3,8), (4,7) \}$.
The maximal consistent pairing included in $c_1$ is $\{(9,5),(1,0),(2,6)\}$.
\end{example}

Observe that pairings and consistency are preserved by structural congruence,
as a direct consequence of the fact that subjects, polarities and
prefixing are preserved by structural congruence.
In our previous work~\cite{beffara-2012-proofs}, we established the following
precise relationship between pairings and executions:

\begin{theorem}\label{thm:pairing}
  Let $P$ be an MCCS term and $c$ a pairing of $P$.
  \begin{itemize}
  \item $c$ is consistent if and only if there is a term $Q$ such that
    $P\redpix{c}Q$,
  \item any two executions $P\redpix{c}Q$ and $P\redpix{c}R$ with the same
    pairing are permutations of each other, and in this case $Q\equiv R$.
  \end{itemize}
\end{theorem}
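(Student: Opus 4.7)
The plan is to prove the two clauses separately. For the first clause, I would argue each direction by induction on the size of the data ($|c|$ for one direction, the length of the execution for the other).

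For the forward direction of clause~(1), proceed by induction on $|c|$. If $c = \emptyset$, take $Q = P$. Otherwise, the acyclicity of $\sim_c <_P \sim_c$ ensures that there is a $\sim_c$-equivalence class $\{\ell, c(\ell)\}$ which is minimal modulo $\sim_c$, i.e.\ admits no strict $<_P$-predecessor in $\domain(c)$ outside itself. By downward closure of $\domain(c)$ under $\leq_P$, this equivalence class actually has no $<_P$-predecessor in $\Locs(P)$ at all, so both $\ell$ and $c(\ell)$ occur at top level in $P$. Since $c$ is a pairing, their subjects coincide and their polarities are opposite, so $P \redpi{\{(\ell, c(\ell))\}} P'$ holds. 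Now $c' = c \setminus \{(\ell, c(\ell))\}$ is a pairing of $P'$; because $\leq_{P'}$ is simply the restriction of $\leq_P$ to $\Locs(P')$ (firing top-level prefixes only deletes their locations and truncates the order), both downward closure and acyclicity carry over to $c'$, and the induction hypothesis yields the required $Q$.

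For the reverse direction of clause~(1), decompose $P \redpix{c} Q$ into single steps $P \redpi{\{(\ell_1,m_1)\}} P_1 \redpi{\{(\ell_2,m_2)\}} \cdots \redpi{\{(\ell_k,m_k)\}} P_k = Q$, whose union of labels is $c$. That $c$ is a pairing follows from the definition of $\redpi{}$, which forces matching subjects and opposite polarities. For downward closure, note that for location $\ell_i$ to fire at step $i$ it must be at top level in $P_{i-1}$, which happens only after every $<_P$-predecessor of $\ell_i$ has itself fired in some earlier step, hence is in $\domain(c)$. For acyclicity of $\sim_c <_P \sim_c$, observe that the function mapping each $\sim_c$-class to the step at which it fires is strictly monotone with respect to $\sim_c <_P \sim_c$ by the same argument, so $\sim_c <_P \sim_c$ has no cycle.

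For clause~(2), I would prove a standard diamond property for MCCS and conclude by induction on execution length. Two executions $P \redpix{c} Q$ and $P \redpix{c} R$ either begin with the same transition — in which case the intermediate terms are structurally congruent, the remaining pairings are equal, and the induction hypothesis applies — or they fire two distinct pairs of $c$, which are necessarily disjoint as sets of locations; since both sets of locations must be top-level in $P$, the two transitions commute, so one can exchange them to bring both executions to a common second state (up to $\equiv$), after which the induction hypothesis on the suffix finishes the job. Combining the permutation steps shows the two executions are permutations of each other and that $Q \equiv R$.

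I expect the main technical obstacle to be the diamond property in clause~(2): one must check that two disjoint, top-level interactions genuinely commute up to structural congruence, which is straightforward in MCCS because there is no name binding and the two fired prefixes sit in different $\para$-components, but still requires a careful unfolding of Definition~\ref{def:execution} and the associativity/commutativity of $\para$. The clause~(1) arguments are more conceptual, relying essentially on the fact that $\leq_P$-minimal pairs in a consistent pairing are top level and that firing them preserves both consistency conditions, which follows directly from the definitions once downward closure is spelled out.
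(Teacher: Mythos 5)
The paper does not actually prove Theorem~\ref{thm:pairing}: it is imported from the earlier \emph{Proofs as executions} paper, so there is no in-text argument to compare yours against. Taken on its own terms, your proof is correct and is the natural self-contained argument. The forward direction of clause~(1) is right, including the small but essential observation that acyclicity of $\sim_c<_P\sim_c$ already forbids $c(\ell)<_P\ell$ (a one-step cycle), so that a class minimal among classes is in fact $\leq_P$-minimal in $\Locs(P)$ once downward closure is invoked, hence both locations sit in distinct top-level parallel components and the step fires; the preservation of consistency under removal of a fired pair is as routine as you claim. The reverse direction and the monotone ``firing step'' function for acyclicity are also fine. The only place you are a little loose is clause~(2): when the two executions begin with distinct pairs $t_1$ and $t_2$, exchanging just the first two transitions is not quite enough --- you need to locate $t_1$ wherever it occurs in the second execution and bubble it to the front past every intermediate transition, using the facts that all those transitions are location-disjoint from $t_1$ and that a top-level redex stays enabled until one of its locations is consumed. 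That is the standard permutation lemma and it holds in MCCS for exactly the reasons you give (no binders, prefixes in separate components), but the induction should be set up as ``reduce the second execution to one starting with $t_1$, then apply the hypothesis to the suffixes'' rather than as a single swap. With that restructuring the argument is complete.
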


Maximal consistent pairings represent executions of processes until a state
where no more execution is possible.

We can actually relate pairings and logical schedules in a precise way.
Consider a term $P$ and proof $π$ of $\ttypea{P}⊸\ttypea{1}$.
In a cut elimination process of $\tproofa{P}$ against $π$, eventually each
modality link in $\tproofa{P}$ will be eliminated by a dual link also from
$\tproofa{P}$, and this association between modality links is of course
independent of the cut elimination sequence since proof normalisation is
confluent.
Hence $π$ induces a pairing between modality links in $\tproofa{P}$, and since
modality links are in bijection with locations in $P$, this in turn induces a
pairing of $P$.
This pairing is maximal since it reaches all locations in $P$, and by
Theorem~\ref{thm:exec-async} it is consistent. Reciprocally, every maximal
consistent pairing of $P$ induces a class of proofs of
$\ttypea{P}⊸\ttypea{1}$.

Moreover, this relationship between pairings and proofs is actually a
correspondence between pairings of $P$ and possible ways of positioning axiom
links on top of a canonical multiplicative proof structure for
$\ttypea{P}⊸\ttypea{1}$, as observed in the proof of Lemma~\ref{step-async}.
However, this correspondence is not straightforward (like one axiom for each
pair) because there are more leaves in proofs of this type than there are
actions in $P$, moreover the instantiation of propositional variables is far
from obvious since, as in the proof of Proposition~\ref{exec-async}, it may
include the whole type of a term involved in the execution.
We defer the precise study of this relationship to future work.

\bigskip

The next step in semantics is the study of denotational models of proofs in
our system, which should provide new insights for the denotational semantics
of processes.
For instance, it is a trivial remark that coherence
spaces~\cite{girard-1987-linear} have a formal similarity with event
structures~\cite{winskel-1987-event}, although a precise relationship is hard
to formulate, especially in a proofs-as-processes approach.
Since our constructions maps processes to types, {i.e.} spaces, and schedules
to proofs {i.e.} cliques or configurations in spaces, it should be possible to
formalize such a relationship in our setting in a meaningful way.
A difficulty in this respect is of course to define a non-trivial
interpretation of modalities: this is needed in order to identify cliques that
are interpretations of multiplicative proofs, as opposed to general proofs
with modalities.

\subsection{Name hiding and passing}
\label{sec:hiding}

The question of how to integrate name hiding in our system is crucial as it is
necessary for expressiveness and modularity.
However, Theorems \ref{thm:exec-sync}~and~\ref{thm:exec-async}, which
characterize execution in the proof system, do not extend to a calculus with
name hiding.
Indeed, a reasonable type assignment for a term $\new{a}P$ should not let $a$
appear in the formula $\lceil\new{a}P\rceil$, yet actions on this bound $a$
should be scheduled.

A syntactic way of handling name hiding could be to allow quantification over
channel names in the logic, so that $\new{a}$ in a process would become a
quantification in the type.
A schedule would then have to handle actions on this name without knowing
anything about it, hence considering it effectively as a fresh name.
However, the nature of such a quantifier is unclear:
\begin{itemize}
\item A universal $∀a$ is not an option, since it would allow the scheduler to
  provide any name for $a$ in $\new{a}P$, including one that $P$ or its
  environment already knows, which would produce new possible interactions.
\item An existential $∃a$ is better suited since it forces the scheduler to
  behave independently of the actual name used, effectively treating $a$ as
  fresh.
  However, a proper correspondence can only be obtained only if, by
  construction, $∃a$ is only introduced with a premiss where $a$ is fresh, and
  such a restriction is very unnatural as it does not respect the meaning of
  connectives in the target logic.
\item Using a specific quantifier for freshness, like Miller and Tiu's
  $\nabla$~\cite{miller-2005-proof}, would probably solve this issue but at
  the price of introducing an ad-hoc construct with its particular theory in
  an otherwise well-studied logic, with the consequence that it would make the
  semantic explorations mentioned above more difficult.
\end{itemize}

A very different approach, more semantic than the use of a quantifier, would
be to decide that placing a $\new{a}$ in front of a process would mean
something like making decisions about the scheduling that will happen on name
$a$, by cutting the relevant conclusions against a partial schedule ({i.e.} a
proof without modality rules).
\begin{itemize}
\item Only the asynchronous translation of section~\ref{sec:async} can handle
  this approach, since Theorem~\ref{thm:exec-sync} proves that any partial
  schedule in the synchronous variant is the beginning of an execution.
  This is consistent with the fact that it amounts to scheduling a channel in
  advance of actual execution.
\item It blurs the distinction between a process and a scheduler, since now a
  process does include scheduling decisions.
  This issue could be solved by typing a process $\new{a}P$ with something
  like $S⊸\ttypea{P'}$ where $S$ is the type of a scheduler for a channel
  (depending on how $a$ is used in $P$) and $P'$ is a type for $P$ where $a$
  is hidden.
\end{itemize}

The latter approach would most likely require second-order quantification in
order to specify $S$.
Further study of this point should also be the way to approach the problem of
extending our work to name passing calculi: once we know how to properly
specify what a channel is for restricting it, we can investigate how to
communicate it through another channel.

\medskip

\subsection{CPS translations and determinisation}
\label{sec:cps}

Both our type assignments actually translate actions and processes into types
of \emph{intuitionistic} MLL formulas, therefore proofs extracted from
processes could be interpreted as linear functions.
Moreover, the structure of the translations, up to the position of modalities,
is $\lfloor\actn{a}P\rfloor=∀α.(\lfloor P\rfloor⊸α)⊸α$, which is a weak form of
double-negation, so weak in fact that this formula, without modalities, is
isomorphic to $\lfloor P\rfloor$.
This suggests a kind of (linear) continuation-passing-style translation of
processes into a linear $λ$-calculus, which could be another setting for the
semantic and operational study of processes and schedules.
In relation with other forms of proofs-as-processes correspondences, this
might suggest an approach to the question of desequentialisation of processes
(like translations into
solos~\cite{beffara-2006-concurrent,laneve-2003-solos}) as a translation
between logical systems, in a way similar to CPS translations from classical
logic into intuitionistic logic.

This idea of CPS-like translations indeed suggest a step in a wider programme
for the approach of non-determinism in logic.
It is well known that the classical sequent calculus LK is non confluent, to
the point that any two proofs of the same formula can be identified by
conversion through cut-elimination, which makes any direct denotational
semantics of proofs degenerate, and that interesting semantics can be obtained
by means of double-negation translations into intuitionistic logic.
Such translations are effective because they impose constraints on proof
reduction, which on the computational side corresponds to fixing reduction
strategies like call-by-name or call-by-value in languages with control.
It is a form of determinisation, and such translations have been extensively
studied.
The systematic study of decompositions of LK into linear logic by Danos, Joinet and
Schellinx in the system LKtq~\cite{danos-1997-new} is of particular interest
with respect to the present work since it combines determinisation and
linearisation.
Reading these translations from the computational point of view provides
translations of classical proofs into a deterministic process calculus.

\begin{center}
  \begin{tikzpicture}[
      point/.style={shape=rectangle,inner sep=1ex,outer sep=2mm},
      link/.style={line width=1pt}]
    \node[point] (pc) at (0,3cm)
      {\begin{tabular}{c}process\\calculus\end{tabular}};
    \node[point] (sched) at (0,0)
      {\begin{tabular}{c}processes\\+ schedules\end{tabular}};
    \node[point] (ll) at (4cm,0)
      {\begin{tabular}{c}linear\\logic\end{tabular}};
    \draw[link,->] (pc) to node[pos=0.5,left=2mm] {determinisation} (sched);
    \draw[link,<->] (sched) to node[pos=0.5,below=1mm] {typing} (ll);
    \node[point] (lk) at (8cm,3cm) {LK};
    \node[point] (lj) at (8cm,0) {LJ};
    \draw[link,->] (lk) to node[pos=0.5,right=2mm]
      {\begin{tabular}{l}$¬¬$ and CPS\\translations\end{tabular}} (lj);
    \node[point] (x) at (4cm,3cm)
      {\begin{tabular}{c}semantics of\\concurrency?\end{tabular}};
    \draw[link,->] (lj) to node[pos=0.5,below=1mm] {linearisation} (ll);
    \draw[link,->] (lk) to node[pos=0.5,above left] {LKtq} (ll);
    \draw[link,dotted,->] (x) -- (ll);
    \draw[link,dotted,->] (lk) -- (x);
    \draw[link,dotted,<->] (pc) -- (x);
  \end{tikzpicture}
\end{center}

Our approach of non-determinism by means of scheduling suggest a different
take on these questions: would it be possible to provide a form of
linearisation of classical proofs that would preserve the intrinsic
non-determinism of LK, so that various evaluations strategies could be
obtained by effectively choosing different types of schedulers?
The semantic study of our proofs-as-schedules correspondence could be an
approach to this kind of question by means of a study of the semantics of
processes.

\providecommand\urlalt[2]{\href{#1}{\url{#2}}}
\providecommand\urlprefix[3]{}
\bibliography{zotero}
\bibliographystyle{eptcs}

\end{document}